\newtheorem{definition}{Definition}
\newtheorem{assumption}{Assumption}
\newtheorem{lemma}{Lemma}
\newtheorem{corollary}{Corollary}
\newtheorem{example}{Example}
\newtheorem{remark}{Remark}
\journal{arXiv}
\begin{document}

\begin{frontmatter}

%% Title, authors and addresses

%% use the tnoteref command within \title for footnotes;
%% use the tnotetext command for theassociated footnote;
%% use the fnref command within \author or \affiliation for footnotes;
%% use the fntext command for theassociated footnote;
%% use the corref command within \author for corresponding author footnotes;
%% use the cortext command for theassociated footnote;
%% use the ead command for the email address,
%% and the form \ead[url] for the home page:
%% \title{Title\tnoteref{label1}}
%% \tnotetext[label1]{}
%% \author{Name\corref{cor1}\fnref{label2}}
%% \ead{email address}
%% \ead[url]{home page}
%% \fntext[label2]{}
%% \cortext[cor1]{}
%% \affiliation{organization={},
%%            addressline={}, 
%%            city={},
%%            postcode={}, 
%%            state={},
%%            country={}}
%% \fntext[label3]{}

\title{Rescaled Bayes factors: a class of e-variables} %% Article title

%% use optional labels to link authors explicitly to addresses:
%% \author[label1,label2]{}
%% \affiliation[label1]{organization={},
%%             addressline={},
%%             city={},
%%             postcode={},
%%             state={},
%%             country={}}
%%
%% \affiliation[label2]{organization={},
%%             addressline={},
%%             city={},
%%             postcode={},
%%             state={},
%%             country={}}

\author{Thorsten Dickhaus} %% Author name

%% Author affiliation
\affiliation{organization={Institute for Statistics, University of Bremen},%Department and Organization
            addressline={P. O. Box 330 440}, 
            city={Bremen},
            postcode={28344}, 
            state={Bremen},
            country={Germany}}

%% Abstract
\begin{abstract}
%% Text of abstract
A class of e-variables is introduced and analyzed. Some examples are presented.
\end{abstract}

%%Graphical abstract
%\begin{graphicalabstract}
%\includegraphics{grabs}
%\end{graphicalabstract}

%%Research highlights
%\begin{highlights}
%\item Research highlight 1
%\item Research highlight 2
%\end{highlights}

%% Keywords
\begin{keyword}
%% keywords here, in the form: keyword \sep keyword
Frequentist-Bayes reconciliation \sep hypothesis testing \sep likelihood ratio order \sep safe anytime-valid inference

%% PACS codes here, in the form: \PACS code \sep code

%% MSC codes here, in the form: \MSC code \sep code
%% or \MSC[2008] code \sep code (2000 is the default)
\MSC[2020] 62F03 %Parametric hypothesis testing
\end{keyword}

\end{frontmatter}

%% Add \usepackage{lineno} before \begin{document} and uncomment 
%% following line to enable line numbers
%% \linenumbers

%% main text
%%

%%%%%%%%%%%%%%%%%%%%%%%%%%%%%%%%%%%%%%%%%%%%%%%%%%%%%%%%%%%%%
%%                I N T R O D U C T I O N                  %%
%%%%%%%%%%%%%%%%%%%%%%%%%%%%%%%%%%%%%%%%%%%%%%%%%%%%%%%%%%%%%
\section{Introduction}\label{sec1}
During recent years, safe anytime-valid inference (SAVI) has emerged as a hot topic in theoretical statistics; cf. \cite{Ramdas-StatScience} for an overview. Due to Ville’s inequality (see, e. g., \cite{Choi1988}), it is possible to test with so-called e-values and their running product the same null hypothesis several times (with an increasing amount of data) without compromising type I error control. In this, the defining property of an e-variable is that its (conditional) expectation under the null hypothesis is upper-bounded by one; see Section \ref{sec2} for mathematical details. 

In this paper, a class of e-variables is introduced and analyzed. Furthermore, some examples will be presented. The rest of the material is structured as follows. In Section \ref{sec2}, the general concept will be explained. Section \ref{sec3} is concerned with analytical solutions to the problem, while Section \ref{sec4} deals with numerical approximations. In Section \ref{sec5}, a case study is presented. We conclude with a discussion in Section \ref{sec6}. 

%%%%%%%%%%%%%%%%%%%%%%%%%%%%%%%%%%%%%%%%%%%%%%%%%%%%%%%%%%%%%
%%                P R E L I M I N A R I E S                %%
%%%%%%%%%%%%%%%%%%%%%%%%%%%%%%%%%%%%%%%%%%%%%%%%%%%%%%%%%%%%%
\section{The general concept}\label{sec2}
Let $\left(\mathcal{X}, \mathcal{B}(\mathcal{X}), \{\mathbb{P}_\theta: \theta \in \Theta \subseteq \mathbb{R}^d\}\right)$ denote a parametric statistical model, where $\mathcal{X}$ denotes the sample space, $\mathcal{B}(\mathcal{X})$ denotes a suitable $\sigma$-field over $\mathcal{X}$, $\theta$ is the parameter of the model taking its values in the parameter space $\Theta$, and $d \in \mathbb{N}$ is a given integer. Assume that $\Theta$ is equipped with a suitable $\sigma$-field $\mathcal{B}(\Theta)$, let $\Theta_0 \subset \Theta$ denote a measurable, non-empty subset of $\Theta$, and let $\Theta_1 = \Theta \setminus \Theta_0$. We will consider test problems of the type
\begin{equation}\label{testproblem}
H_0: \theta \in \Theta_0 \text{~~versus~~} H_1: \theta \in \Theta_1.
\end{equation}
We will refer to $H_0$ as the null hypothesis (or ``null'' for short) and to $H_1$ as the alternative (hypothesis). 

\begin{definition}
An e-variable for the test problem \eqref{testproblem} is a statistic $E: \left(\mathcal{X}, \mathcal{B}(\mathcal{X})\right) \to \left([0, \infty], \mathcal{B}([0, \infty])\right)$ fulfilling that
\[
\forall \theta_0 \in \Theta_0: \mathbb{E}_{\theta_0}\left[E\right] \leq 1.
\]
Realizations of e-variables are called e-values.
\end{definition}

We will from now on assume that the family $\left(\mathbb{P}_\theta\right)_{\theta \in \Theta}$ is dominated by some reference probability measure $\nu$, and we will denote the density (Radon-Nikodym derivate) of $\mathbb{P}_\theta$ with respect to $\nu$ by $f_\theta$, $\theta \in \Theta$. We will refer to $f_\theta$ as the likelihood (function) of the model under $\theta$.

Let $\pi$ denote a (prior) probability distribution on $\left(\Theta, \mathcal{B}(\Theta)\right)$. Then, the prior probability of $\Theta_i$ is given by 
\[
\pi_i \equiv \mathbb{P}\left(\Theta_i\right) = \int_{\Theta_i} \pi(d \theta) = \int_{\Theta_i} \pi(\theta) d\theta, \; i \in \{0, 1\}, 
\]
where we identify the measure $\pi$ with its Lebesgue density, assuming the existence of the latter. Throughout the remainder, we assume that $\min\{\pi_0, \pi_1\} > 0$, to avoid pathologies.

\begin{definition}
%Define functions $g_0$ and $g_1$ on $\Theta$ such that
%\[
	%\pi(\theta) 
	%= 
	%\begin{cases} 
		%\pi_{0}  g_{0}(\theta), &\theta \in \Theta_{0},\\
        %\pi_{1}  g_{1}(\theta), &\theta \in \Theta_{1}.
    %\end{cases}
%\]
Let $\mathbf{x}$ be given data, interpreted as the realization of a random variate $\mathbf{X}$ which formalizes the data-generating process. Then, the Bayes factor BF (in favor of $H_1$ as compared to $H_0$) is given by
\begin{equation}\label{BF-defi}
	\mathrm{BF}\left(\mathbf{x}\right) 
	= 
	\frac{\int_{\Theta_{1}} f_\theta(\mathbf{x}) \pi(d \theta)  / \pi_1}
		 {\int_{\Theta_{0}} f_\theta(\mathbf{x}) \pi(d \theta)  / \pi_0} 
	= 
	\frac{\mathbb{P}\left(\Theta_{1} | \mathbf{X} = \mathbf{x} \right) / \mathbb{P}\left(\Theta_{0} | \mathbf{X} = \mathbf{x}\right)}{\mathbb{P}(\Theta_{1}) / \mathbb{P}(\Theta_{0})},
\end{equation}
where $\mathbb{P}$ refers to the joint distribution of data and parameter.
\end{definition}
Bayes factors are the basis of many Bayesian tests; cf., e.\ g., Section 4.1.4 in \cite{Lee-Bayes4} or Section 8.3.1 in \cite{Spok-Dickhaus-Buch}. Despite these origins in Bayesian statistics, we may interpret $\mathrm{BF}\left(\mathbf{x}\right)$ simply as a summary of the data $\mathbf{x}$, thus as a statistic with values in $[0, \infty]$. This interpretation has been indicated by \cite{Yekutieli-TEST} and by \cite{Dickhaus-Bayes}, among others. In particular, we are with this interpretation able to compute the expected value of the random variable $\mathrm{BF}\left(\mathbf{X}\right)$ under a given $\theta \in \Theta$ as follows:
\begin{equation}\label{BF-expectation}
\mathbb{E}_{\theta}\left[\mathrm{BF}\right]=\int_{\mathcal{X}}\mathrm{BF}(\mathbf{x})f_{\theta}(\mathbf{x}) \nu(d \mathbf{x}).
\end{equation}
Now, consider the following general assumption.
\begin{assumption}\label{general-assumption}
For a given model, a given test problem, and a given prior, the maximum 
\begin{equation}\label{mu-star}
\mu^* = \max_{\theta_0 \in \Theta_0} \mathbb{E}_{\theta_0}\left[\mathrm{BF}\right]
\end{equation}
exists in $(0, \infty)$.
\end{assumption}

\begin{corollary}\label{main-idea}
Under Assumption \ref{general-assumption}, an e-variable $\mathrm{BF}_\text{rescaled}$ for testing $H_0$ versus $H_1$ is given by 
\[
\mathrm{BF}_\text{rescaled}(\mathbf{x}) = \mathrm{BF}\left(\mathbf{x}\right) / \mu^*.
\]
We call $\mathrm{BF}_\text{rescaled}$ the rescaled version of the Bayes factor $\mathrm{BF}$.
\end{corollary}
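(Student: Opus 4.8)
The plan is to verify directly that $\mathrm{BF}_\text{rescaled}$ satisfies the two requirements in the definition of an e-variable, namely that it is a statistic into $[0,\infty]$ and that its expectation under every null parameter is bounded by one. Because the statement is a corollary of Assumption \ref{general-assumption}, I expect the argument to be short: the substantive content (existence, finiteness, and positivity of the bound) has already been absorbed into the assumption, so what remains is essentially positive homogeneity of the expectation.

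First I would confirm that $\mathrm{BF}_\text{rescaled}$ is a well-defined statistic with values in $[0,\infty]$. By \eqref{BF-defi}, $\mathrm{BF}(\mathbf{x})$ is a ratio of integrals of the non-negative densities $f_\theta$ against the prior, so $\mathrm{BF}(\mathbf{x}) \in [0,\infty]$ and $\mathbf{x} \mapsto \mathrm{BF}(\mathbf{x})$ is $\mathcal{B}(\mathcal{X})$-measurable, as already granted by the interpretation of $\mathrm{BF}$ as a statistic. Since Assumption \ref{general-assumption} guarantees $\mu^* \in (0,\infty)$, dividing by the finite, strictly positive constant $\mu^*$ preserves both non-negativity and measurability, so $\mathrm{BF}_\text{rescaled} = \mathrm{BF}/\mu^*$ is again a statistic into $[0,\infty]$.

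Next I would establish the defining inequality. Fix an arbitrary $\theta_0 \in \Theta_0$. By the definition of $\mu^*$ as the maximum in \eqref{mu-star}, we have $\mathbb{E}_{\theta_0}[\mathrm{BF}] \leq \mu^*$. Using that $\mu^*$ is a strictly positive constant together with the representation \eqref{BF-expectation} of the expectation, positive homogeneity of the integral yields
\[
\mathbb{E}_{\theta_0}\left[\mathrm{BF}_\text{rescaled}\right] = \frac{1}{\mu^*}\,\mathbb{E}_{\theta_0}\left[\mathrm{BF}\right] \leq \frac{\mu^*}{\mu^*} = 1.
\]
Since $\theta_0 \in \Theta_0$ was arbitrary, this is exactly the e-variable condition $\forall \theta_0 \in \Theta_0: \mathbb{E}_{\theta_0}[\mathrm{BF}_\text{rescaled}] \leq 1$.

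The only delicate point worth flagging is that the inequality $\mathbb{E}_{\theta_0}[\mathrm{BF}] \leq \mu^*$ must hold simultaneously for \emph{all} null parameters; this is precisely why $\mu^*$ is taken as the maximum over $\Theta_0$ rather than the expectation under a single distinguished $\theta_0$, and it is the role of Assumption \ref{general-assumption} to ensure that this maximum exists, is finite, and is strictly positive, so that the rescaling is simultaneously legitimate (no division by zero) and non-degenerate (finite). Beyond this, there is no genuine technical obstacle: the conclusion is the immediate consequence of positive homogeneity of the expectation.
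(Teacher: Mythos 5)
Your proof is correct and matches the paper's (implicit) argument exactly: the paper treats the corollary as immediate from Assumption \ref{general-assumption}, relying on precisely the same steps you spell out --- $\mathbb{E}_{\theta_0}[\mathrm{BF}] \leq \mu^*$ for every $\theta_0 \in \Theta_0$ by definition of the maximum, followed by positive homogeneity of the expectation to conclude $\mathbb{E}_{\theta_0}[\mathrm{BF}_\text{rescaled}] \leq 1$. Your additional remarks on measurability and on why $\mu^* \in (0,\infty)$ makes the rescaling legitimate are a careful spelling-out of what the paper leaves tacit.
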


\begin{remark}
The quantity $\mu^*$ from \eqref{mu-star} is of interest in its own right. Namely, if $\mu^* > 1$, then $\mu^*$ quantifies the ``price to pay for SAVI'' on the (relative) Bayes factor scale. 
\end{remark}

%%%%%%%%%%%%%%%%%%%%%%%%%%%%%%%%%%%%%%%%%%%%%%%%%%%%%%%%%%%%%
%%       A N A L Y T I C A L    E X A M P L E S            %%
%%%%%%%%%%%%%%%%%%%%%%%%%%%%%%%%%%%%%%%%%%%%%%%%%%%%%%%%%%%%%
\section{Analytical solutions}\label{sec3}
In this section, we present examples in which $\mu^*$ from \eqref{mu-star} can be computed explicitly or in which the parameter value maximizing the right-hand side of \eqref{mu-star} can be determined analytically, respectively.
\begin{example}[Simple null hypothesis]\label{example-simple}
Assume that $\Theta_0 = \{\theta_0\}$ is a singleton. Then, $\mu^* = 1$, and $\mathrm{BF} = \mathrm{BF}_\text{rescaled}$ is an e-variable by itself. This is a direct consequence of Fubini's theorem, since
\begin{eqnarray*}
\mathbb{E}_{\theta_0}\left[\mathrm{BF}\right] &=& \int_{\mathcal{X}} \left[\int_{\Theta_1} f_{\theta}(\mathbf{x}) \pi(d \theta) / \pi_1\right] \nu(d \mathbf{x})\\
&=&\int_{\Theta_1}\left[\int_{\mathcal{X}} f_{\theta}(\mathbf{x}) \nu(d \mathbf{x})\right] \frac{1}{\pi_1} \pi(d \theta) = \frac{\pi_1}{\pi_1} = 1;
\end{eqnarray*}
cf. the third main interpretation of an e-variable provided in Section 1.1 of \cite{safe-testing}.
\end{example}

\begin{example}[Monotonic likelihoods]\label{example-hoang1}
Here, we consider the special case of $\Theta = \mathbb{R}$, and we consider one-sided test problems with $\Theta_0 = (-\infty, 0]$. In this setting, \cite{Hoang-Dickhaus-JSCS} considered cases in which the Bayes factor itself is defined in terms of a (real-valued) summary statistic $S: \left(\mathcal{X}, \mathcal{B}(\mathcal{X})\right) \to \left(\mathbb{R}, \mathcal{B}(\mathbb{R})\right)$; cf. \cite{Johnson-BF-statistics,Johnson-BF-statistics-props}. Letting $g_\theta$ denote the density of $S$ under $\theta$, the Bayes factor based on $S$ is given by
\begin{equation}\label{BF-statistic}
\mathrm{BF}\left(s\right) 
	= 
	\frac{\int_{\Theta_{1}} g_\theta(s) \pi(d \theta)  / \pi_1}
		 {\int_{\Theta_{0}} g_\theta(s) \pi(d \theta)  / \pi_0}, 
\end{equation}
where $s \in \mathbb{R}$ denotes a realization of $S$.

Now assume that the ``likelihood functions'' $\left\{g_\theta: \theta \in \Theta\right\}$ are likelihood ratio-ordered in the sense that
\begin{align}
\forall \theta_0 \in \Theta_0: &~~g_\theta(s) \text{~~is increasing in s},\label{mono1}\\
\forall \theta_1 \in \Theta_1: &~~g_\theta(s) \text{~~is decreasing in s}.\label{mono2}  
\end{align}
Under these assumptions, \cite{Hoang-Dickhaus-JSCS} have shown in their Lemma A.2 that $\mu^* = \mathbb{E}_{0}\left[\mathrm{BF}\right]$.
\end{example}

\begin{example}[Reduction to a singleton]\label{example-hoang2}
Consider the same setting as in Example \ref{example-hoang1}, and assume that the model is such that the monotonicity assumptions \eqref{mono1} and \eqref{mono2} are fulfilled. Then, \cite{Hoang-Dickhaus-JSCS} have also argued that the ``reduced Bayes factor'' $\mathrm{BF}_0$, given by
\begin{equation}\label{reduced-BF}
\mathrm{BF}_0\left(s\right) 
	= 
	\frac{\int_{\Theta_{1}} g_\theta(s) \pi(d \theta)  / \pi_1}{g_0(s)} 
\end{equation}
is an e-variable for the one-sided test problem defined in Example \ref{example-hoang1}. In \eqref{reduced-BF}, the denominator of the Bayes factor based on $S$ is ``reduced'' to a one-point prior in zero under the null. We mention this reduction technique here, although it does not make use of 
re-scaling. 
\end{example}

\begin{remark}
In particular, \cite{Hoang-Dickhaus-JSCS} considered $p$-variables as such summary statistics $S$. Even though the monotonicity assumptions \eqref{mono1} and \eqref{mono2} appear very restrictive, several $p$-value models that are popular in the literature fulfill these assumptions. See Section 4.1 in \cite{Hoang-Dickhaus-JSCS} for concrete examples of such models. In Section \ref{sec5}, we will take up one of the models from Section 4.1 in \cite{Hoang-Dickhaus-JSCS}, for exemplary purposes.
\end{remark}

%%%%%%%%%%%%%%%%%%%%%%%%%%%%%%%%%%%%%%%%%%%%%%%%%%%%%%%%%%%%%
%%       N U M E R I C A L    E X A M P L E S              %%
%%%%%%%%%%%%%%%%%%%%%%%%%%%%%%%%%%%%%%%%%%%%%%%%%%%%%%%%%%%%%
\section{Numerical approximations}\label{sec4}
In this section, we consider situations in which no analytical solution of the maximization problem from \eqref{mu-star} is available or feasible. We describe numerical approximations of $\mu^*$ under such settings. 
\begin{example}[Monte Carlo simulations]\label{example-monte-carlo}
Let $MC \in \mathbb{N}$ be a given number of Monte Carlo repetitions. As described, for instance, in Section 6 of \cite{Dickhaus-Bayes}, it is possible to draw pseudo-random realizations $\mathbf{x}^{(1)}, \ldots, \mathbf{x}^{(\text{MC})}$ from the sampling distribution  $\mathbb{P}_\theta$ of the data for a given parameter value, to evaluate the Bayes factor for each of these pseudo-random realizations, and to approximate $\mathbb{E}_{\theta}\left[\mathrm{BF}\right]$ by the arithmetic mean of $\left\{\mathrm{BF}\left(\mathbf{x}^{(1)}\right), \ldots, \mathrm{BF}\left(\mathbf{x}^{(\text{MC})}\right)\right\}$. An analogous technique has also been used in Appendix S3 of the Supplementary Material of \cite{Turner-JSPI2024} in order to demonstrate that a certain family of Bayes factors does not possess the e-variable property. This strategy is especially effective if $\Theta_0$ is a finite set, because in this case the simulated values of $\left\{\mathbb{E}_{\theta_0}\left[\mathrm{BF}\right]: \theta_0 \in \Theta_0\right\}$ can simply be tabulated and their maximum can be read off from that table. Examples of statistical models with discrete parameter spaces have been discussed, e.\ g., by \cite{Choirat-discrete} and by \cite{Vajda-finite}.

To provide a concrete example, it may be of interest to test a null hypothesis about a population total under a capture-recapture sampling scheme. In this context, simulation techniques related to the hypergeometric distribution have been proposed in Section 6.1 of \cite{Ross-online-book}.
\end{example}

\begin{example}[Numerical maximization techniques]\label{example-newton}
Especially in cases where $\Theta_0$ is an interval or a Cartesian product of intervals, respectively, numerical maximization techniques like, for instance, a simple line search, or the Newton-Raphson algorithm, or related methods (cf., e.\ g., Part I of \cite{numopt-book}), can be used to find $\mu^*$. In Section \ref{sec5} we will illustrate this technique under a concrete statistical model.
\end{example}

%%%%%%%%%%%%%%%%%%%%%%%%%%%%%%%%%%%%%%%%%%%%%%%%%%%%%%%%%%%%%
%%                C A S E    S T U D Y                     %%
%%%%%%%%%%%%%%%%%%%%%%%%%%%%%%%%%%%%%%%%%%%%%%%%%%%%%%%%%%%%%
\section{Case study: Beta-distributed $p$-value}\label{sec5}
In this section, we illustrate the concepts discussed in the previous sections by means of a concrete statistical model. Namely, we assume that a $p$-variable $P$ is available as the summary statistic of the data. Following Section 4.1 in \cite{Hoang-Dickhaus-JSCS}, we assume the following model for the Lebesgue density $g_\theta$ of $P$, where $\theta$ is a real-valued parameter:
\begin{equation}\label{p-variable-density}
g_\theta = \begin{cases}
\textrm{Lebesgue density of Beta}\left(1 - \theta, 1\right), &\theta \leq 0,\\
\textrm{Lebesgue density of Beta}\left(1, 1 + \theta\right), &\theta > 0.
\end{cases}
\end{equation}
In particular, $P$ is uniformly distributed on $[0, 1]$ under $\theta = 0$. Realizations of $P$ will be denoted by $p$. As in Examples \ref{example-hoang1} and \ref{example-hoang2}, we consider $\Theta_0 = (-\infty, 0]$ here, too.

\begin{lemma}
The family of densities $\left\{g_\theta: \theta \in \mathbb{R}\right\}$ fulfills the monotonicity conditions \eqref{mono1} and \eqref{mono2} on its (common) support $[0, 1]$. 
\end{lemma}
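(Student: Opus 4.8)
The plan is to make both monotonicity requirements completely explicit by rewriting $g_\theta$ in closed elementary form in each of the two regimes of \eqref{p-variable-density}, and then to read off monotonicity in $s$ directly from the sign of the relevant exponent. The key simplification that makes this painless is that in both regimes one of the two Beta shape parameters equals $1$, so the Beta density collapses to a single power of $s$ (respectively of $1-s$) and the Beta-function normalizer reduces to an elementary expression.

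Concretely, I would first treat the null regime $\theta \le 0$. Inserting the shape parameters $(1-\theta, 1)$ into the Beta density yields $g_\theta(s) \propto s^{-\theta}(1-s)^{0} = s^{-\theta}$ on $[0,1]$, and since $B(1-\theta,1) = \Gamma(1-\theta)\Gamma(1)/\Gamma(2-\theta) = 1/(1-\theta)$, one obtains the closed form $g_\theta(s) = (1-\theta)\,s^{-\theta}$. Because $\theta \le 0$ forces the exponent $-\theta \ge 0$, the map $s \mapsto s^{-\theta}$ is non-decreasing on $[0,1]$, which is exactly condition \eqref{mono1}. Symmetrically, for the alternative regime $\theta > 0$ the parameters $(1, 1+\theta)$ give $g_\theta(s) \propto (1-s)^{\theta}$ with normalizer $B(1,1+\theta) = 1/(1+\theta)$, hence $g_\theta(s) = (1+\theta)(1-s)^{\theta}$; since the exponent $\theta$ is now strictly positive, $s \mapsto (1-s)^{\theta}$ is (strictly) decreasing on $[0,1]$, which is condition \eqref{mono2}.

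Finally I would address the two minor points of care, which together are the only subtlety, as there is no genuine analytic obstacle here. First, the common support really is $[0,1]$ for every $\theta$: each shape parameter lies in $[1,\infty)$, so both closed forms are finite and continuous on all of $[0,1]$ and no support mismatch between null and alternative arises. Second, the boundary case $\theta = 0$ gives $g_0 \equiv 1$, a constant density; this is consistent with \eqref{mono1} once ``increasing'' is read in the weak (non-decreasing) sense, which is the convention in force. Once the densities are reduced to $(1-\theta)\,s^{-\theta}$ and $(1+\theta)(1-s)^{\theta}$, the monotonicity claims follow instantly from the sign of the exponent, so the proof amounts to correctly matching the two pieces of the piecewise definition \eqref{p-variable-density} to $\Theta_0 = (-\infty,0]$ and $\Theta_1 = (0,\infty)$.
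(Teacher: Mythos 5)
Your proof is correct and follows essentially the same route as the paper's: the paper likewise notes that $g_\theta(p)$ is proportional to $p^{-\theta}$ for $\theta<0$ (increasing) and to $(1-p)^{\theta}$ for $\theta>0$ (decreasing), with $g_0$ constant. Your explicit computation of the normalizers $1-\theta$ and $1+\theta$ is a harmless extra that the paper sidesteps by arguing via proportionality alone.
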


\begin{proof}
Under $\theta = 0$, $g_0$ is constant on $[0, 1]$. In the case of $\theta < 0$, $g_\theta(p)$ is proportional to $p^{-\theta}$, which clearly is increasing in $p \in [0, 1]$. In the case of $\theta > 0$, $g_\theta(p)$ is proportional to $(1 - p)^{\theta}$, which clearly is decreasing in $p \in [0, 1]$.
\end{proof}

As prior distribution on $\Theta = \mathbb{R}$, we choose here Student's $t$-distribution with five degrees of freedom, $t_5$ for short. The $t_5$-density is reasonably smooth for a proper numerical handling, but it exhibits heavy tails, such that also parameter values with quite a large distance to zero play a non-negligible role in the statistical model. Due to the symmetry of the $t_5$-density, $\pi_0 = \pi_1 = 1/2$ here.

As argued in Example \ref{example-hoang1}, we have here that
\begin{equation}\label{mu-star-beta}
\mu^* = \mathbb{E}_{0}\left[\mathrm{BF}\right] = \int_{0}^1 \frac{\int_{0}^\infty \beta_{1, 1 + \theta}(p) f_{t_5}(\theta) d \theta}
		 {\int_{-\infty}^0 \beta_{1-\theta, 1}(p) f_{t_5}(\theta) d \theta} d p,
\end{equation}
where $\beta_{a, b}$ denotes the Lebesgue density of Beta$(a, b)$.
To obtain (an approximation of) the numerical value of $\mu^*$, we first used the \verb=integrate= function of the \verb=R= software package (Version R x64 3.6.1). This yielded a value of $\mu^* \approx 1.804$. A computer simulation in \verb=R= with $500{,}000$ Monte Carlo repetitions yielded an approximation of $\mu^* \approx 1.807$. These two numerical values agree quite well.

Table \ref{table1} provides simulated values of $\mathbb{E}_{\theta}\left[\mathrm{BF}\right]$ for some other values of $\theta \in \Theta_0$.

\begin{table}[htp]%% placement specifier
%% Use tabular environment to tag the tabular data.
%% https://en.wikibooks.org/wiki/LaTeX/Tables#The_tabular_environment
\centering%% For centre alignment of tabular.
\begin{tabular}{|l| c c c c c c|}%% Table column specifiers
%% Tabular cells are separated by &
\hline
$\theta$                                        &$-2.5$             &$-2.0$             &$-1.5$             
&$-1.0$             &$-0.5$             &$-0.25$\\ 
\hline
$\mathbb{E}_{\theta}\left[\mathrm{BF}\right]$   &$\approx 0.461$    &$\approx 0.533$    &$\approx 0.641$    
&$\approx 0.814$    &$\approx 1.122$   &$\approx 1.386$\\ 
\hline
\end{tabular}
%% Use \caption command for table caption and label.
\caption{Simulated expected values of the Bayes factor for the model discussed in Section \ref{sec5}, for several parameter values in the null hypothesis. Each simulated value is based on $500{,}000$ Monte Carlo repetitions.}\label{table1}
\end{table}
The monotonicity property of $\theta \mapsto \mathbb{E}_{\theta}\left[\mathrm{BF}\right]$ can clearly be read off from the values tabulated in Table \ref{table1}. It is remarkable that $\mathbb{E}_{\theta}\left[\mathrm{BF}\right]$ exceeds $1.0$ for all $\theta \geq -0.5$, demonstrating that the rescaling of the Bayes factor, which we have proposed in Section \ref{sec2}, is indeed necessary to guarantee SAVI in this model.

Finally, we verified all values tabulated in Table \ref{table1} by means of the \verb=integrate= function of the \verb=R= software, and we maximized the function $\theta \mapsto \mathbb{E}_{\theta}\left[\mathrm{BF}\right]$ by means of the \verb=optimize= function of the \verb=R= software. This function numerically determined $\theta = 0$ as the parameter value maximizing $\mathbb{E}_{\theta}\left[\mathrm{BF}\right]$, with a numerical approximation of $\mu^* = \mathbb{E}_{0}\left[\mathrm{BF}\right] \approx 1.804$, as before.

All \verb=R= computations described in this section ran very fast. For example, producing all six values tabulated in Table \ref{table1} took altogether less than 15 minutes on a standard laptop computer. The respective \verb=R= worksheets are available from the author upon request.

%%%%%%%%%%%%%%%%%%%%%%%%%%%%%%%%%%%%%%%%%%%%%%%%%%%%%%%%%%%%%
%%                D I S C U S S I O N                      %%
%%%%%%%%%%%%%%%%%%%%%%%%%%%%%%%%%%%%%%%%%%%%%%%%%%%%%%%%%%%%%
\section{Discussion}\label{sec6}
Under Assumption \ref{general-assumption}, rescaling Bayes factors is a general way to transform Bayes factors into e-variables. Following \cite{Dickhaus-RSS-dicussion}, one may call this rescaling operation a ``BF-to-e-calibration'', in the spirit of the ``p-to-e-calibration'' methods discussed in Section 2 of \cite{VovkWang2021-Annals}. In the (somewhat trivial) case of a simple null hypothesis, this calibration has certain optimality properties regarding power, as extensively argued by \cite{safe-testing}. In the case of composite null hypotheses, other calibration methods may be more powerful in certain situations. However, the simplicity of the rescaling operation may nevertheless be convenient in practice.

Of course, it is also possible to rescale other non-negative summary statistics in order to calibrate them to the e-value scale. However, due to the close interrelations of Bayes factors and e-variables worked out in previous literature, we consider Bayes factors a canonical choice for this calibration.

%%%%%%%%%%%%%%%%%%%%%%%%%%%%%%%%%%%%%%%%%%%%%%%%%%%%%%%%%%%%%
%%                A P P E N D I C E S                      %%
%%%%%%%%%%%%%%%%%%%%%%%%%%%%%%%%%%%%%%%%%%%%%%%%%%%%%%%%%%%%%
%% The Appendices part is started with the command \appendix;
%% appendix sections are then done as normal sections
%\appendix
%\section{Example Appendix Section}
%\label{app1}

%%%%%%%%%%%%%%%%%%%%%%%%%%%%%%%%%%%%%%%%%%%%%%%%%%%%%%%%%%%%%
%                     R E F E R E N C E S                   %
%%%%%%%%%%%%%%%%%%%%%%%%%%%%%%%%%%%%%%%%%%%%%%%%%%%%%%%%%%%%%

%
\end{document}